\numberwithin{equation}{section}
\theoremstyle{plain}
\newtheorem{theorem}{Theorem}
\newtheorem{lemma}{Lemma}
\theoremstyle{definition}
\newtheorem{definition}{Definition}
\newtheorem*{assi*}{(I) Short-range interaction}
\newtheorem*{asspone*}{(P1) Log-H\"older continuity condition}
\newtheorem*{assptwo*}{(P2) Rosenblatt strongly mixing condition}
\newtheorem*{dskn*}{$\dskn$}
\newtheorem*{dsknn*}{$\dsknn$}                                                                    
\theoremstyle{remark}
\newcommand{\prob}[1]{\DP\left\{#1\right\}}
\newcommand{\esm}[1]{\mathbb{E}\left[\,#1\,\right]}
\newcommand{\Bone}{\mathbf{1}}
\newcommand{\BC}{\mathbf{C}}
\newcommand{\BG}{\mathbf{G}}
\newcommand{\BH}{\mathbf{H}}
\newcommand{\BK}{\mathbf{K}}
\newcommand{\BP}{\mathbf{P}}
\newcommand{\BU}{\mathbf{U}}
\newcommand{\BX}{\mathbf{X}}
\newcommand{\CE}{\mathcal{E}}
\newcommand{\DN}{\mathbb{N}}
\newcommand{\DP}{\mathbb{P}}
\newcommand{\DR}{\mathbb{R}}
\newcommand{\DZ}{\mathbb{Z}}
\newcommand{\BDelta}{\mathbf{\Delta}}
\newcommand{\BPsi}{\mathbf{\Psi}}
\newcommand{\Bx}{\mathbf{x}}
\newcommand{\By}{\mathbf{y}}
\newcommand{\Bu}{\mathbf{u}}
\newcommand{\Bv}{\mathbf{v}}
\newcommand{\FB}{\mathfrak{B}}
\newcommand{\ess}{\mathrm{ess}}
\DeclareMathOperator{\card}{card}
\DeclareMathOperator{\dist}{dist}
\DeclareMathOperator{\supp}{supp}
\newcommand{\ee}{\mathrm{e}}
\newcommand{\condI}{\mathbf{(I)}}
\newcommand{\condPone}{\mathbf{(P1)}}
\newcommand{\condPtwo}{\mathbf{(P2)}}
\newcommand{\dskn}{\mathbf{(DS.}k,N\mathbf{)}}
\newcommand{\dsknn}{\mathbf{(DS.}k,n,N\mathbf{)}}
\newcommand{\tto}[1]{\smash{\mathop{\,\,\,\, \longrightarrow \,\,\,\, }\limits_{#1}}}
\begin{document}
\title[N-body localization at low energy for correlated potentials]{ N-body  localization  for the Anderson model with  strongly mixing correlated random potentials}

\author[T.~Ekanga]{Tr\'esor EKANGA$^{\ast}$}

\address{$^{\ast}$%
Institut de Math\'ematiques de Jussieu, 
Universit\'e Paris Diderot,
Batiment Sophie Germain,
13 rue Albert Einstein,
75013 Paris,
France}
\email{tresor.ekanga@imj-prg.fr}
\subjclass[2010]{Primary 47B80, 47A75. Secondary 35P10}
\keywords{multi-particle, low energy, random operators, Anderson localization, strongly mixing, correlated potentials}
\date{\today}

\begin{abstract}
This work establishes the Anderson localization in both the spectral exponential and the strong dynamical localization for the multi-particle Anderson tight-binding model with correlated but strongly mixing random external potential. The results are obtained near the lower edge of the spectrum of the multi-particle Hamiltonian. In particular, the exponential decay  of the eigenfunctions is proved in the max-norm and the dynamical localization in the Hilbert-Schmidt norm. The proofs need the conditional probability distribution  function of the  random external stochastic processes to obey the uniform log-H\"older continuity condition.
\end{abstract}

\maketitle

\section{Introduction, assumptions and the main result}

\subsection{Introduction}
We study the discrete multi-particle Anderson type Hamiltonians with correlated but strongly mixing random external potentials. In many earlier works on the multi-particle theory of many body interacting quantum systems \cites{AW09,CS09,KN13,E13}, the results were obtained for independent and identically distributed random external potentials. The method used by the authors was either an extension to multi-particle systems of the fractional moment method developed by Aizenmann and Molchanov \cite{AM93} in the single particle theory or an extension to multi-particle systems of multi-scale analysis developed by Fr\"ohlich and Spencer \cite{FMSS85}.

The analysis of correlated potentials was done by von Dreifus and Klein \cite{DK91} where the authors proved localization and for completely analytic Gibbs measures. Later, Chulaevsky \cites{BCS11,CS08,C08,C10} obtained the Wegner type bounds for multi-particles  models with correlated potentials using an extension to multi-particle systems of the so-called Stollmann's Lemma implemented in the book By Stollmann \cite{St01} for single-particle models in the continuum. Also, Klopp himself \cite{Kl12} analyzed the spectral statistics of multi-particle random operators with weakly correlated potentials. For the localization aspects of multi-particle models with correlated random potentials, the work by Chulaevsky \cite{C16} provides a proof of localization under the high disorder regime. Moreover, in \cite{C16}, the author assumed the Rosenblatt's strongly mixing condition for the correlated potentials.

In this paper,  we use the Wegner type bound obtained in \cite{C10} and prove localization at low energy for multi-particle models with correlated but strongly mixing random external potentials. For this result at low energy, we develop a fairly modified version of the multi-particle multi-scale analysis following the scheme proposed in our earlier work   \cite{E13} which was firstly introduced in the paper \cite{CS09}. There is an important problem that overcome for multi-particle systems, in particular the analysis near the bottom of the spectrum because the ergodic properties and  results cannot longer been applied as for single particle models. So, we have to develop new strategies in order to ensure the almost surely non-randomness of the multi-particle lower spectral edge because this is important for the multi-particle Anderson localization at low energy as it is mentioned in \cite{E13}.

The proofs of our localization results  are performed with the help of the multi-particle multi-scale analysis which in turn, uses both the uniform H\"older continuity of the conditional probability distribution function of the correlated random variables and the Rosenblatt's strongly mixing condition of the random field. The main novelty of the paper is that localization is obtained for energies near the bottom of the spectrum and for correlated potentials. The paper is in some way an extension of the i.i.d. case \cite{E13} and for that reason all the proofs using independence of the random variables are revisited and re-written.  

We now discuss on the structure of the rest of the paper. In the next Section, we present the general facts useful for the multi-scale analysis and proves the initial scale bounds. Section 3 is devoted to the scale induction step of the multi-scale analysis. Finally in Section 4 we sketch the proofs of the main results. Some parts of the rest of the paper overlap with \cite{E13}.

\subsection{The model and the assumptions}
We are interested in the $N$-particle Hamiltonian under some assumptions on $\BDelta$, $V$ and $\BU$. Fix an arbitrary integer $N\geq 2$. Since multi-scale analysis applied to this Hamiltonian requires also the consideration of Hamiltonians for $n$-particle subsystems for any $1\leq n\leq N$, we introduce notations for such $n$.
Given an integer $\nu\ge 1$, we generally equip $\DZ^{\nu}$ with the max-norm $|\,\cdot\,|$ defined by
\begin{equation}\label{eq:max.norm}
|x|=\max\{|x_1|,\dots,|x_{\nu}|\},
\end{equation}
for $x=\bigl(x_1,\dots,x_{\nu}\bigr)\in\DZ^{\nu}$. Occasionally (e.g., in the definition of the Laplacian)
we will use another norm defined by
\begin{equation}\label{eq:sum.norm}
|x|_1=|x_1|+\dots+|x_{\nu}|.
\end{equation}
Let $n\geq 1$ and $d\geq 1$ be two integers. A configuration of $n$ distinguishable quantum particles $\{x_1,\ldots,x_n\}$ in the lattice $\DZ^d$ is represented by a lattice vector $\Bx\in(\DZ^d)^n\cong \DZ^{nd}$ with coordinates $x_j=(x_j^{(1)},\dots,x_j^{(d)})\in\DZ^d$, $j=1,\dots,n$. 
 The $nd$-dimensional lattice nearest-neighbor Laplacian $\BDelta$ is defined by 
\begin{equation} \label{eq:def.Delta}
(\BDelta\BPsi)(\Bx)
=\sum_{\substack{\By\in\DZ^{nd}\\|\By-\Bx|_1=1}}\left(\BPsi(\By)-\BPsi(\Bx)\right)
=\sum_{\substack{\By\in\DZ^{nd}\\|\By-\Bx|_1=1}}\BPsi(\By)-2dn\BPsi(\Bx),
\end{equation}
for any $\BPsi\in\ell^2(\DZ^{nd})$, $\Bx\in\DZ^{nd}$. Note that $\BDelta$ is bounded and $-\BDelta$ is nonnegative.
 We will consider  random Hamiltonians $\BH^{(n)}(\omega)$ for $n=1,\ldots,N$ of the form
\[
\BH^{(n)}(\omega)=-\BDelta+\sum_{j=1}^nV(x_j,\omega)+\BU,
\]
acting in the Hilbert space $\mathcal{H}^{(n)}=\ell^2(\DZ^{dn})$.

\begin{assi*}
The potential of inter-particle interaction
\[
\BU\colon(\DZ^d)^n\to\DR
\]
is bounded and of the form
\begin{equation}\label{eq:def.U}
\BU(\Bx)=\sum_{1\leq i<j\leq n}\Phi(|x_i - x_j|),
\end{equation}
where the points $\{x_i,i=1,\ldots,n\}$ represent the coordinates of $\Bx\in(\DZ^d)^n$ and 
$\Phi\colon\DN \to \DR_+$
is a compactly supported non-negative function:
\begin{align}\label{eq:cond.U}
\exists\, r_0\in\DN:\quad
\supp\, \Phi \subset [0,r_0].
\end{align}
We will call $r_0$ the ``range'' of the interaction $\BU$.
\end{assi*}

The random field $\{V(x,\omega); x\in\DZ^d\}$ is measurable with respect to some probability space $(\Omega,\FB,\DP)$. We define 
\[
F_{V,x}(t):=\prob{V(x,\omega)\leq t}\qquad \text{ and } F_{V,x}(t\big|\FB_{\neq x}):=\prob{V(x,\omega)\leq t\big| \FB_{\neq x}},
\]
the conditional probability distribution functions of $V$ where $\FB_{\neq x}$ represents the sigma-algebra generated by the random variables $\{V(y,\omega); y\neq x\}$

\begin{asspone*}
It is assumed that the conditional distribution functions $F_{V,x}$ are uniformly Log-H\"older continuous: for some $\kappa>0$ and any $\varepsilon >0$,
\[
\ess \sup_{x\in\DZ^d} \sup_{t\in\DR} \left(F_{V,x}(t+\varepsilon\big|\FB_{\neq x})-F_{V,x}(t\big|\FB_{\neq x})\right)\leq Const\cdot |\ln(\varepsilon)|^{\kappa}.
\]
\end{asspone*}

\begin{assptwo*}
Let $L>0$ and positive constants $C_1>0$, $C_2>0$. For any pair of subsets $\Lambda', \Lambda'' \subset \DZ^{d}$ with $\dist(\Lambda',\Lambda'')\geq L$ and any events $\CE'\in\FB_{\Lambda'}$, $\CE''\in\FB_{\Lambda''}$,
\[
\left|\prob{\CE'\cap\CE''}-\prob{\CE'}\prob{\CE''}\right|\leq \ee^{-C_1\ln^2 L}.
\]
Further, for any integer $\ell\geq 2$, and random variables $X_1(\omega),\ldots X_{\ell}(\omega)$, we have that
\[
\left|\esm{X_1\cdots X_{\ell}} - \esm{X_1}\cdots \esm{X_{\ell}}\right| \leq \ee^{-C_2 L^d},
\]
with $C_2> 3^d$.
\end{assptwo*}
Assumption $\condPtwo$ was used by Chulaevsky in \cite{C10} in the framework of his so-called Direct scaling of the multi-scale analysis under the high disorder regime. Above, 
$\FB_{\Lambda'}$ and $\FB_{\Lambda''}$ are the sigma-algebra generated  by the random variables $\{V(x,\omega); x\in\Lambda'\}$ and $\{V(x,\omega); x\in\Lambda''\}$ respectively. 

\subsection{The result}

\begin{theorem}
Assume that the hypotheses $\condI$, $\condPone$ and $\condPtwo$ hold true. Then

\begin{enumerate}
\item[A)] 
The lower spectral edge $E_0^{(N)}$ of $\BH^{(N)}(\omega)$, is almost surely non-random and there exist $E^*> E_0^{(N)}$ such that the spectrum of $\BH^{(N)}(\omega)$ in $[E_0^{(N)}, E^*]$ is pure point and each eigenfunction corresponding to eigenvalues in $[E_0^{(N)},E^*]$ is exponentially decaying at infinity in the max-norm.\\

\item[B)]
There exist $E^*>E^{(N)}_0$ and $s^*(N,d)>0$ such that for any bounded domain $\BK\subset \DZ^{Nd}$ and any $0<s<s^*$ we have 
\begin{equation}\label{eq:low.energy.dynamical.loc}
\esm{\sup_{\|f\|_{\infty}\leq 1}\Bigl\| |\BX|^{\frac{s}{2}}f(\BH^{(N)}(\omega))\BP_{I}(\BH^{(N)}(\omega))\Bone_{\BK}\Bigr\|_{HS}^2}<\infty,
\end{equation}
where $(|\BX|\BPsi)(\Bx):=|\Bx|\BPsi(\Bx)$, $\BP_{I}(\BH^{(N)}(\omega))$ is the spectral projection of $\BH^{(N)}(\omega)$ onto the interval $I:=[E^{(N)}_0,E^*]$, and the supremum is taken over bounded measurable functions $f$. 
\end{enumerate} 
\end{theorem}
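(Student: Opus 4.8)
The plan is to establish both parts of the theorem as consequences of a multi-particle multi-scale analysis (MPMSA) adapted to correlated potentials, run down to the bottom of the spectrum, following the scheme of \cite{E13} and \cite{CS09} but with the independence-based Wegner estimate replaced by the Wegner-type bound of Chulaevsky \cite{C10} available under $\condPone$. The backbone is the usual induction on the number of particles $n=1,\dots,N$ and, for each fixed $n$, an induction on the scale $L_k$ (with $L_{k+1}\sim L_k^\alpha$ for a suitably chosen $\alpha>1$) of the probabilistic statement $\dskn$ asserting that for separated pairs of $n$-particle boxes the probability that both are singular decays faster than $L_k^{-p}$ for a large power $p=p(N,d)$.

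The first main step is the \emph{initial-scale estimate}, which is where working at low energy is essential: one shows that for $E$ in a small interval $[E_0^{(N)},E^\ast]$ above the a.s.-constant lower spectral edge, a box of size $L_0$ is non-singular with overwhelming probability. The point is a Combes--Thomas / Lifshitz-tail type bound: near the bottom of the spectrum $-\BDelta\ge 0$ and $\BU\ge 0$, so low energy forces the random sum $\sum_j V(x_j,\omega)$ to be atypically small on the box, an event whose probability is controlled by $\condPone$ (and, for the multi-particle geometry, by the mixing bound $\condPtwo$ to decouple contributions of well-separated single-particle projections). Establishing that $E_0^{(N)}$ is a.s.\ non-random — not automatic for $N\ge 2$ since the potential is not ergodic under the diagonal $\DZ^d$-action in the naive way, and here not even i.i.d.\ — is done separately, using the strong mixing in $\condPtwo$ to get a zero–one law for the infimum of the spectrum; this is the point flagged in the introduction as requiring new strategies.

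The second main step is the \emph{scale-induction step} (Section 3 in the paper): assuming $\dsknn$ for all subsystems of fewer than $n$ particles and $\dskn$ for $n$ particles at scale $L_k$, derive $\dskno$. One splits $n$-particle boxes into \emph{interactive} (IN) and \emph{partially interactive} (PI) ones. For PI boxes the Hamiltonian essentially factorizes over a partition of the particles into non-interacting clusters, so singularity is bounded via the lower-$n$ inductive hypotheses together with a Wegner bound to handle the resonance between the two cluster energies; the correlations force one to use $\condPtwo$ to make the two factors (living over disjoint regions of $\DZ^d$) almost independent, at the cost of the $\ee^{-C_1\ln^2 L}$ and $\ee^{-C_2 L^d}$ error terms, which are negligible against the polynomial target. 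For IN boxes one uses the geometric resolvent inequality and a radial-descent / "bad annulus" argument (cf. \cite{E13}) to propagate smallness from scale $L_k$ to $L_{k+1}$, again invoking the $\condPone$-Wegner bound of \cite{C10} for the single resonant-box probability. The main obstacle, as usual in this program, is precisely the PI-box analysis combined with the non-independence: one must arrange the MSA length scales and the mixing-error exponents so that the correlation errors from $\condPtwo$ never dominate, and one must verify that the separation hypotheses needed to apply $\condPtwo$ (distance $\ge L$ between the relevant subsets of $\DZ^d$) are actually met by the geometry of separated $n$-particle boxes — this requires a careful Hausdorff-distance bookkeeping on particle configurations.

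Finally, the third step is the \emph{derivation of localization from the MSA output}. Once $\dskn$ holds for $n=N$ at all scales, part (A) follows from the standard argument turning MSA bounds into almost-sure absence of singular boxes centered at any fixed pair of points, hence (via the sub-harmonicity of eigenfunctions of $\BH^{(N)}$ outside a box and a Shnol-type argument) exponential decay in the max-norm of every generalized eigenfunction with energy in $[E_0^{(N)},E^\ast]$, giving pure point spectrum there; the a.s.\ constancy of $E_0^{(N)}$ proved above makes this interval meaningful. Part (B), the strong dynamical (Hilbert--Schmidt) localization with the weight $|\BX|^{s/2}$ for small $s$, follows from the variant of MSA output that controls, with polynomially-decaying-in-distance probability, the decay of resolvents between distant boxes, fed into the Germinet--Klein "bootstrap"/eigenfunction-correlation machinery in its multi-particle form (as in \cite{CS09,E13}); the only changes from the i.i.d.\ case are that every Wegner invocation is the $\condPone$ version and every decoupling invocation pays a $\condPtwo$ error, both of which are harmless at the end. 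I expect the PI-box decoupling under correlations to be the step demanding the most care; everything else is a re-run of \cite{E13} with the substitutions noted.
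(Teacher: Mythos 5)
Your overall skeleton (initial-scale estimate at low energy from a Lifshitz-tail/Combes--Thomas argument, scale induction with PI/IN boxes using the Wegner bound of \cite{C10} under $\condPone$ and paying mixing errors from $\condPtwo$, then the standard MSA-to-localization machinery as in \cites{CS09,E13}) is exactly the route of the paper, which indeed delegates the induction step and the final derivation of (A) and (B) to \cite{E13}. The genuine gap is in how you treat the lower spectral edge $E_0^{(N)}$. First, your proposed ``zero--one law for the infimum of the spectrum'' from $\condPtwo$ is not justified as stated: $\inf\sigma(\BH^{(N)}(\omega))$ is not in any obvious way measurable with respect to a tail sigma-algebra of the field $V$ (changing $V$ at a single site perturbs the $N$-particle potential on infinitely many configurations), so a Kolmogorov-type argument does not apply directly; what mixing actually buys is ergodicity of the $\DZ^d$-shift, and the whole difficulty flagged in the paper is that the induced $N$-particle potential is not an ergodic field on $\DZ^{Nd}$, so the single-particle ergodic argument cannot simply be repeated.

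Second, and more importantly, non-randomness of $E_0^{(N)}$ alone is not enough to make your argument close. Your initial-scale estimate (as in the paper's Lemma \ref{lem:initial.scale.V.positive} and Theorem \ref{thm:initial.estimate}) controls boxes only for energies $E\le E^*\sim L_0^{-1/2}$, i.e.\ near $0$; for the resulting interval to be a neighborhood $[E_0^{(N)},E^*]$ of the bottom of the spectrum with $E^*>E_0^{(N)}$, you must \emph{locate} the edge, namely prove $E_0^{(N)}=0$ almost surely (after the normalization $\inf\Sigma^{(1)}=0$). The paper does this concretely: $\sigma(\BH^{(n)}(\omega))\subset[0,\infty)$ a.s.\ since $-\BDelta\ge 0$, $V\ge 0$ a.s.\ and $\BU\ge 0$; and $0\in\sigma(\BH^{(n)}(\omega))$ a.s.\ by a Weyl-sequence construction exploiting the finite range $r_0$ of $\BU$ from $\condI$: take compactly supported Weyl sequences $\phi_j^m$ at energy $0$ for the single-particle operators $H^{(1)}_j(\omega)$, translate their supports to sites that are mutually farther apart than $r_0$, and observe that on the support of the tensor product $\phi^m=\phi_1^m\otimes\cdots\otimes\phi_n^m$ the interaction vanishes, so $\|\BH^{(n)}(\omega)\phi^m\|\to 0$. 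This identification of $E_0^{(N)}$ with $0$ is the missing ingredient in your proposal; without it the interval your MSA controls is not known to touch the spectrum at its lower edge, and statement (A) as formulated is not reached.
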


\section{Large deviation bounds}
We define the cube 
\[
C^{(1)}_L(0):=\{x\in \DZ^d: |x|\leq L\},
\]
and more general 
\[
\BC^{(n)}_L(\Bu)=\{\By\in\DZ^{nd}: |\By-\Bx|\leq L\},
\]
$n=1,\ldots,N$.
Introduce the property of \emph{non singularity}:
\begin{definition}
The cube $C^{(1)}_L(0)$ is called $(E,m)$-non singular iff
\[
\max_{y\in \partial^-C^{(1)}_L(0)}\left| \left(H_{C^{(1)}_L(0)}(\omega)-E\right)^{-1}\right|\leq \ee^{-\gamma(m,L,n)L},
\]
where $\gamma(m,L,n)=m(1+L^{-1/8})^{N-n+1}$ and $\partial^-C^{(1)}_L(0)=\{y\in\DZ^d: |x-y|=L\}$.
 Also define the Green functions 
$\BG^{(n)}_{\BC^{(n)}_L(0)}(E,\Bx,\By)$ the matrix elements of the restricted resolvent operator  with simple boundary conditions $(\BH^{(n)}_{\BC^{(n)}_L(0)}(\omega)-E)^{-1}$
\end{definition}
We prove in this subsection an analog of the large deviation estimate of \cite{K08} in the case of correlated potentials under the assumption $\condPtwo$. 

\begin{lemma}\label{lem:DEV} Under assumption $\condPtwo$, 
for $L=\lfloor (\beta E)^{-\frac{1}{2}}\rfloor$ with $\beta$ small and $L$ large enough,

\[
\prob{\frac{1}{|C^{(1)}_L(0)|} \sum_{x\in C^{(1)}_L(0)} V^{L}(x,\omega)<\frac{E}{2}}\leq \ee^{-\gamma|C^{(1)}_L(0)|},
\]
for some $\gamma>0$ and $V^{(L)}(x,\omega)=\min\{V(x,\omega), \frac{c}{3} L^{-2}\}$ with $c>0$.
\end{lemma}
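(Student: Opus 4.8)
The statement is a Lifshitz--tail type large--deviation bound: with overwhelming probability the spatial average, over a single--particle cube, of the truncated potential is at least $E/2$. The plan is to prove it by a Chernoff (exponential Chebyshev) argument: in the i.i.d.\ setting of \cite{K08} this is immediate for a family of independent, uniformly bounded variables, and here we would recover it with independence replaced by the decoupling furnished by $\condPtwo$. Throughout write $W_x:=V^{(L)}(x,\omega)=\min\{V(x,\omega),\tfrac c3L^{-2}\}$, so that $0\le W_x\le M:=\tfrac c3L^{-2}$, and set $n:=|C^{(1)}_L(0)|$. We would first dispose of the ``soft'' input, the mean estimate: since $L=\lfloor(\beta E)^{-1/2}\rfloor$ makes $M$ comparable to $E$ (the constant $c$ being taken large enough for this), a direct computation --- identical to the i.i.d.\ case \cite{K08}, and using only the nontriviality of the single--site law of $V$ --- shows that for $\beta$ small and $L$ large the common mean $\mu:=\esm{W_x}$ beats $E/2$ by a fixed margin, $\delta:=\mu-\tfrac E2>0$ with $\delta$ comparable to $E$. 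Thus $\bigl\{\tfrac1n\sum_x W_x<\tfrac E2\bigr\}$ is a genuine large deviation, and what remains is to produce the exponential rate.

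For the rate we would fix an auxiliary separation scale $\ell\le L$ and split $C^{(1)}_L(0)$ into the $\ell^d$ residue classes $S_j:=C^{(1)}_L(0)\cap(a_j+\ell\DZ^d)$ of $\ell\DZ^d$, so that the sites inside one $S_j$ are pairwise at distance $\ge\ell$ and $|S_j|$ is of order $n/\ell^d$. Since the overall average is at least the minimum of the class averages, $\bigl\{\tfrac1n\sum_x W_x<\tfrac E2\bigr\}\subset\bigcup_{j=1}^{\ell^d}\bigl\{\sum_{x\in S_j}W_x<\tfrac E2|S_j|\bigr\}$, and it suffices to estimate one class and sum. For $\lambda>0$,
\[
\prob{\textstyle\sum_{x\in S_j}W_x<\tfrac E2|S_j|}\le\ee^{\lambda\frac E2|S_j|}\,\esm{\textstyle\prod_{x\in S_j}\ee^{-\lambda W_x}};
\]
each $\ee^{-\lambda W_x}\in(0,1]$ is a function of $V$ at the single site $x$, the sites of $S_j$ are pairwise $\ge\ell$--separated, so the decoupling part of $\condPtwo$ (applied with separation $\ell$) gives $\esm{\prod_{x\in S_j}\ee^{-\lambda W_x}}\le(\esm{\ee^{-\lambda W_0}})^{|S_j|}+\ee^{-C_2\ell^d}$. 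Bounding $\esm{\ee^{-\lambda W_0}}\le\ee^{-\lambda\mu+\lambda^2M^2/8}$ by Hoeffding's lemma for $W_0\in[0,M]$ and optimising at $\lambda=4\delta/M^2$, the product term (prefactor included) is at most $\ee^{-2\delta^2|S_j|/M^2}$ while the decoupling error contributes $\ee^{2\delta E|S_j|/M^2-C_2\ell^d}$; since $\delta$ and $M$ are comparable to $E$, both exponents are comparable to $|S_j|$, so after summing over the $\ell^d$ classes
\[
\prob{\tfrac1n\textstyle\sum_{x}W_x<\tfrac E2}\le\ell^d\bigl(\ee^{-c'|S_j|}+\ee^{\,c''|S_j|-C_2\ell^d}\bigr),\qquad c',c''>0,\quad |S_j|\ \text{of order}\ (L/\ell)^d
\]
It then remains to choose $\ell$: using $C_2>3^d$ (with room to spare) and taking $\ell$ of order $L^{1/2}$ makes $C_2\ell^d$ dominate $c''|S_j|$, so the second term, and the prefactor $\ell^d$, are absorbed into the first; the outcome is the asserted bound $\ee^{-\gamma|C^{(1)}_L(0)|}$ for a suitable $\gamma>0$ once $L$ is large (at worst, with the optimal balance, a stretched--exponential bound of the same flavour, which is in any case far more than the scale--induction step needs).

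The only delicate point, and hence the main obstacle, lies in this last step and is structural: in the i.i.d.\ case one applies Chernoff to the whole cube at once, whereas here neighbouring $W_x$ are correlated, so one is forced to pass to a pairwise $\ge\ell$--separated sub-family before decoupling --- and $\ell$ must be large enough for the decoupling error $\ee^{-C_2\ell^d}$ of $\condPtwo$ to beat the exponential Chernoff prefactor $\ee^{\lambda\frac E2|S_j|}$, yet not so large that $S_j$ becomes too thin to support a large deviation. That this balance is affordable is exactly what the super--exponential ($L^d$, not merely polynomial) decay in $\condPtwo$, together with $C_2>3^d$, is there to provide. A secondary point worth isolating is the uniform bound $W_x\le\tfrac c3L^{-2}$: it is what makes Hoeffding's lemma quantitative with the correct dependence on $E$, and, downstream, what makes the pointwise inequality $V\ge V^{(L)}$ useful in the operator bound that converts this probabilistic estimate into the initial--scale bound on $H_{C^{(1)}_L(0)}(\omega)$.
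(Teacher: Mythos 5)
Your Chernoff-plus-decoupling scheme is internally coherent, but it does not deliver the bound the lemma actually asserts, and the shortfall is structural rather than a matter of constants. Because you read the second part of $\condPtwo$ as a decoupling estimate with error $\ee^{-C_2\ell^d}$ valid only for variables that are pairwise $\ell$-separated, you must sparsify into the $\ell^d$ residue classes; the error term $\ee^{\lambda\frac{E}{2}|S_j|-C_2\ell^d}$ then forces $\ell$ at least of order $L^{1/2}$ (both exponents being of order $(L/\ell)^d$ and $\ell^d$ respectively), and at that point your good term $\ee^{-2\delta^2|S_j|/M^2}$ has $|S_j|\asymp(L/\ell)^d\asymp L^{d/2}$, i.e.\ you obtain a stretched exponential $\ee^{-cL^{d/2}}$ and not $\ee^{-\gamma|C^{(1)}_L(0)|}\asymp\ee^{-\gamma'L^d}$ as claimed. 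Your parenthetical ``at worst a stretched exponential of the same flavour'' concedes exactly this gap; the weaker rate may well suffice for the later initial-scale estimates (which are themselves stretched exponential), but it is not a proof of the stated lemma. The paper gets the full rate because it uses the second bound of $\condPtwo$ in a much stronger way: after reducing the event to a counting event for the Bernoulli variables $\xi_i=\Bone_{\{V(i,\omega)<\gamma\}}$, it applies the decoupling bound directly to the exponential moments $\ee^{t\xi_i}$ at \emph{all} sites of the cube simultaneously --- no separation, no sparsification --- with error $\ee^{-C_2L^d}$, and the hypothesis $C_2>3^d$ is there precisely so that $\ee^{-C_2L^d}\leq\ee^{-|C^{(1)}_L(0)|}$; an elementary Cram\'er-type argument ($f(0)=0$, $f'(0)=r-q_i>0$, hence $f(t_0)>0$) then yields $\ee^{-\gamma|C^{(1)}_L(0)|}$. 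If you keep the separation-based reading of the mixing hypothesis, you must either strengthen the decoupling input or restate the lemma with the weaker rate.

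A secondary point concerns your mean-margin step. The paper's own first reduction replaces $E/2$ by $\frac{\beta^2}{2}L^{-2}$, which shows the intended scaling is $L\approx\beta E^{-1/2}$ (so $E\asymp\beta^2L^{-2}$ lies far below the truncation level $\frac{c}{3}L^{-2}$ once $\beta$ is small, with $c$ fixed); your patch of taking $c$ large so that $M$ becomes comparable to $E$ under the literal reading $L=\lfloor(\beta E)^{-1/2}\rfloor$ is not really available, because downstream (Temple's inequality in the Lifshitz-tail argument of Kirsch's notes) the truncation constant is dictated, not free, and your Hoeffding exponent $2\delta^2/M^2$ depends quantitatively on this choice. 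Note also that the paper's reduction to the indicators $\xi_i$ makes the final rate $\gamma$ depend only on the nondegeneracy $q_i<r<1$, with no dependence on $E$, $\beta$ or $c$, which is cleaner than tracking $\delta/M$ through Hoeffding's lemma.
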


\begin{proof}
We estimate
\begin{align}
&\prob{\frac{1}{|C^{(1)}_L(0)|} \sum V^{L}(i,\omega)<\frac{E}{2}}\notag\\
\leq \quad &\prob{\frac{1}{|C^{(1)}_L(0)|} \sum V^{L}(i,\omega)<\frac{\beta^2}{2}L^{-2}}\notag\\
\leq\quad &\prob{\card \{i\mid   V^{L}(i,\omega)<\frac{c}{3}L^{-2}\}\geq (1-\frac{3 \beta^2}{c})|C^{(1)}_L(0)|L}\label{eq:LD}.
\end{align}
Indeed, if less than $(1-\frac{3\beta^2}{c})|C^{(1)}_L(0)|$ of the $V(i)$ are below $\frac{c}{3}L^{-2}$ then more than $\frac{3\beta^2}{c}|C^{(1)}_L(0)|$ of them are at least $\frac{C}{3}L^{-2}$. In this case
\begin{align*}
\frac{1}{|C^{(1)}_L(0)|} \sum V^L(i,\omega)&\geq \frac{1}{|C^{(1)}_L(0)|}\frac{3\beta^2}{c}|C^{(1)}_L(0)|\frac{c}{3} L^{-2}\\
&= \frac{\beta^2}{2} L^{-2}.\\
\end{align*}

There is a $\gamma>0$ such that $q_i:=\prob{V(i,\omega)<\gamma}<1$. We set 
\[
\xi_i=\left\{ \begin{array}{rl}
1\quad & \text{ if } V(i,\omega)<\gamma,\\
0\quad & \text{otherwise}.
\end{array}
\right.
\]
One has that $\esm{\xi}=q_i$, put $r=1-\frac{3\beta^2}{c}$. By taking $\beta$ small, we can ensure that $q_i<r<1$. Then for sufficiently large $L>0$,
\begin{align}
\eqref{eq:LD}&\leq \prob{\card\{i| V^L(i,\omega)< \frac{c}{3} L^{-2}\}\geq r |C^{(1)}_L(0)|}\notag\\
& \prob{\card \{i| V^L(i,\omega)<\gamma\}\geq r |C^{(1)}_L(0)|}\notag\\
&\leq \prob{\frac{1}{|C^{(1)}_L(0)|}\sum \xi\geq r} \label{eq:LD2}.\\
\end{align}
Recall that 
\[
\prob{X>a}\leq \ee^{-ta}\esm{\ee^{tX}}, \qquad \text{for $t\geq 0$}.
\]
 Indeed,
\begin{align*}
\prob{X>a}&=\int \chi_{X>a}(\omega)d\DP(\omega)\\
&\leq \in\ee^{-ta}\ee^{tX}\chi_{X>a}(\omega)d\DP(\omega)\\
&\leq \int \ee^{-ta}\ee^{tX}d\DP.
\end{align*}
Therefore, we obtain using the Rosenblatt mixing condition $\condPtwo$ that 
\begin{align}
\eqref{eq:LD2}&\leq \ee^{-|C^{(1)}_L(0)|tr}\cdot\esm{\prod_{i\in C^{(1)}_L(0)}\ee^{t\xi_i}}\notag\\
&\leq \ee^{-|C^{(1)}_L(0)|rt}\left((\prod_{i\in C^{(1)}_L(0)} \esm{\ee^{t\xi_i}}) + \ee^{-C_2 L^{d}}\right) \notag\\
&\leq \ee^{-|C^{(1)}_L(0)|rt}\ee^{\ln \prod_{i\in C^{(1)}_L(0)}\esm{\ee^{t\xi_i}}}+\ee^{-|C^{(1)}_L(0)|rt-|C^{(1)}_L(0)|}\label{eq:LD3}.
\end{align}
We first compute the first term in the last equation, we have:  
\begin{align*}
\ee^{-|C^{(1)}_L(0)|rt +\ln\prod_{i\in C^{(1)}_L(0)}\esm{\ee^{t\xi_i}}}&\leq \ee^{-rt|C^{(1)}_L(0)| +\sum_{i\in C^{(1)}_L(0)} \ln(\esm{\ee^{t\xi_i}})}\\
&\leq \ee^{-|C^{(1)}_L(0)|\left( rt - \frac{1}{|C^{(1)}_L(0)|}\sum_{i\in C^{(1)}_L(0)} \ln(\esm{\ee^{t\xi_i}})\right)}\\
\end{align*}
Set $f(t)=rt - \frac{1}{|C^{(1)}_L(0)|} \sum_{i\in C^{(1)}_L(0)|} \ln(\esm{\ee^{t\xi_i}})$, we have 

\[
f'(t)=r-\frac{1}{|C^{(1)}_L(0)|}\sum_{i\in C^{(1)}_L(0)|} \frac{\esm{\xi_i\ee^{t\xi_i}}}{\esm{\ee^{t\xi_i}}},
\]
So,
\begin{align*}
f'(0)&=r-\frac{1}{| C^{(1)}_L(0)|}\sum_{i\in C^{(1)}_L(0)|} q_i\\
&= \frac{1}{| C^{(1)}_L(0)|} \sum_{i\in C^{(1)}_L(0)|} (r-q_i)>0, 
\end{align*}
so that $f'(0)>0$ and $f(0)=0$. Therefore, there exists $t_0>0$ such that $f(t_0)>0$ yielding some constant $\gamma_1= f(t_0)>0$.

We also bound the second term in the equation \eqref{eq:LD3} and obtain
\[
\ee^{-rt| C^{(1)}_L(0)|}\ee^{-C_2 L^d}\leq \ee^{-| C^{(1)}_L(0)|(rt+1)}=\ee^{-\gamma_2 | C^{(1)}_L(0)|},
\]
with $\gamma_2=rt+1>0$. Finally, setting $\gamma =\min\{\gamma_1,\gamma_2\}$ we obtain the desired bound and this competes the proof.
\end{proof} 

\begin{lemma} \label{lem:low.energy.gap.prob}
Let $H^{(1)}(\omega)=-\Delta+V(x,\omega)$ be a random single-particle lattice Schr\"o\-dinger operator in $\ell^2(\DZ^d)$. Assume that the random variables $V(x,\omega)$ satisfies condition $\condPtwo$. Then, for any $C>0$ there exist  arbitrary large $L_0(C)>0$ and $C_1,c>0$ such that for any cube $C^{(1)}_{L_0}(u)$, the lowest eigenvalue $E_0^{(1)}(\omega)$ of  $H_{C^{(1)}_{L_0}(u)}^{(1)}(\omega)$  satisfies
\begin{equation}\label{eq:initial.scale}
\DP\bigl\{E_0^{(1)}(\omega)\leq 2 CL_0^{-1/2}\bigr\}\leq C_1L_0^d \ee^{-cL_0^{d/4}}.
\end{equation}
\end{lemma}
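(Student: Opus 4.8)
The plan is a Lifshitz--tail type bound obtained by a tiling argument. The key point is that a cube of side $\ell$ has lowest non-zero kinetic (Neumann) eigenvalue of order $\ell^{-2}$, so if $\ell\asymp L_0^{1/4}$ then $\ell^{-2}\asymp L_0^{-1/2}$ is exactly the target energy scale while $|C^{(1)}_\ell(0)|\asymp\ell^{d}$ matches the target exponent $L_0^{d/4}$. I would fix a small constant $c_1=c_1(C,\beta,d)>0$, put $\ell=\ell(L_0):=\lfloor c_1L_0^{1/4}\rfloor$, and --- working along the sequence of scales $L_0\to\infty$ for which exact tiling is possible, which is all the statement asks for --- write $C^{(1)}_{L_0}(u)=\bigsqcup_{j=1}^{M}\Lambda_j$ as a disjoint union of $M\le(4/c_1)^dL_0^{3d/4}$ translates $\Lambda_j=C^{(1)}_\ell(u_j)$ of $C^{(1)}_\ell(0)$. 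Because erasing the hopping terms joining distinct $\Lambda_j$'s and those crossing $\partial C^{(1)}_{L_0}(u)$ only lowers the kinetic quadratic form, Neumann bracketing yields $H^{(1)}_{C^{(1)}_{L_0}(u)}(\omega)\ge\bigoplus_{j=1}^M(-\Delta^{\neu}_{\Lambda_j}+V(\cdot,\omega))$, and since (recall $V\ge0$) $V\ge V^{(\ell)}:=\min\{V,\tfrac c3\ell^{-2}\}\ge0$ pointwise,
\[
E_0^{(1)}(\omega)\ \ge\ \min_{1\le j\le M}E_0\!\left(-\Delta^{\neu}_{\Lambda_j}+V^{(\ell)}(\cdot,\omega)\right).
\]

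The next step --- which I expect to be the main obstacle --- is an uncertainty-principle lower bound on each sub-cube: writing $v_j:=|\Lambda_j|^{-1}\sum_{x\in\Lambda_j}V^{(\ell)}(x,\omega)$,
\[
E_0\!\left(-\Delta^{\neu}_{\Lambda_j}+V^{(\ell)}(\cdot,\omega)\right)\ \ge\ \tfrac14\,v_j .
\]
To prove this one uses that $-\Delta^{\neu}_{\Lambda_j}$ is non-negative with ground state the normalised constant $\phi_0$ and a spectral gap $\lambda_1^{\neu}(\Lambda_j)\ge c_\ast\ell^{-2}$ above it; decomposing a normalised trial function as $\psi=\alpha\phi_0+\eta$ with $\eta\perp\phi_0$, and combining $\langle\psi,-\Delta^{\neu}_{\Lambda_j}\psi\rangle\ge c_\ast\ell^{-2}\|\eta\|^2$, $\langle\phi_0,V^{(\ell)}\phi_0\rangle=v_j$, the pointwise bound $0\le V^{(\ell)}\le\tfrac c3\ell^{-2}$ (whence $\|V^{(\ell)}\phi_0\|\le(\tfrac c3\ell^{-2}v_j)^{1/2}$), and the non-negativity of $V^{(\ell)}$, a Cauchy--Schwarz and AM--GM manipulation gives the claim provided the fixed constant $c$ is chosen small relative to $c_\ast$. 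This is the lattice form of the classical ground-state estimate of Kirsch--Martinelli and Simon; its delicate part is the quantitative lower bound on the spectral gap of the discrete Neumann Laplacian on a cube and the matching of the truncation level $\tfrac c3\ell^{-2}$ to that gap.

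It then remains to invoke the large deviation bound. I would choose $c_1$ small enough that, with $E:=(\beta\ell^2)^{-1}$ and $\beta$ the small constant of Lemma~\ref{lem:DEV}, one has $8CL_0^{-1/2}\le\tfrac12E=(2\beta\ell^2)^{-1}$ (possible since $(2\beta\ell^2)^{-1}\ge(2\beta c_1^2)^{-1}L_0^{-1/2}\ge 8CL_0^{-1/2}$ once $c_1^2\le(16C\beta)^{-1}$). Since $\ell=\lfloor(\beta E)^{-1/2}\rfloor$, Lemma~\ref{lem:DEV} applied to each translate $\Lambda_j$ of $C^{(1)}_\ell(0)$ gives, for $L_0$ large,
\[
\prob{v_j\le 8CL_0^{-1/2}}\ \le\ \prob{v_j<\tfrac{E}{2}}\ \le\ \ee^{-\gamma|C^{(1)}_\ell(0)|}\ \le\ \ee^{-cL_0^{d/4}}
\]
for a suitable $c>0$. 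Combining the three displayed estimates with $E_0^{(1)}(\omega)\ge\tfrac14\min_j v_j$ and taking a union bound over $j$,
\[
\prob{E_0^{(1)}(\omega)\le 2CL_0^{-1/2}}\ \le\ \sum_{j=1}^{M}\prob{v_j\le 8CL_0^{-1/2}}\ \le\ M\,\ee^{-cL_0^{d/4}}\ \le\ C_1L_0^{d}\,\ee^{-cL_0^{d/4}},
\]
which is \eqref{eq:initial.scale}. The only remaining point of care is that Lemma~\ref{lem:DEV}, stated for cubes centred at the origin, is here applied to the translates $C^{(1)}_\ell(u_j)$; this is legitimate because hypothesis $\condPtwo$, and the non-degeneracy of $V$ used in the proof of Lemma~\ref{lem:DEV}, hold uniformly in the spatial variable.
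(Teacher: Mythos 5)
Your proof is correct and follows essentially the same route as the paper: the paper's proof of this lemma is a one-line citation of the Temple-inequality (uncertainty-principle) step from Kirsch's Lifshitz-tails argument (equation (11.2) in the proof of Theorem 11.4 of \cite{K08}) combined with Lemma \ref{lem:DEV}, and your tiling into subcubes of side $\asymp L_0^{1/4}$, Neumann bracketing, ground-state bound $E_0\geq \tfrac14 v_j$, and union bound is exactly the content of that citation spelled out for a cube of side $L_0$ (the union bound being the source of the $C_1L_0^d$ prefactor). The points you make explicit --- restricting to a subsequence of $L_0$ for exact tiling, the uniformity in the spatial variable when applying Lemma \ref{lem:DEV} to translated cubes, and choosing the truncation constant small relative to the Neumann spectral gap --- are all compatible with the ``there exist arbitrarily large $L_0$'' formulation of the statement.
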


\begin{proof}
The proof Combines equation (11.2) from the proof of Theorem 11.4 in \cite{K08} and the above result Lemma \ref{lem:DEV} on large deviations which in turn enables the Lifshitz asymptotics for single-particle models with correlated potentials.
\end{proof}
We now derive  the same result for the $n$-particle lattice Anderson model from  the results given by Theorem \ref{thm:CT} and Lemma \ref{lem:low.energy.gap.prob} in the same way as it is done in the works \cites{E11,E13}. For that reason the details of the proof  are omitted.

\begin{lemma} \label{lem:initial.scale.V.positive}
Let $\BH^{(n)}(\omega)=-\BDelta+V(x_1,\omega)+\dots+V(x_n,\omega)+\BU(\Bx)$ be an $n$-particle random Schr\"odinger operator in $\ell^2(\DZ^{nd})$ where  $\BU$ and $V$ satisfy $\condI$, $\condPone$ and $\condPtwo$ respectively.
Then, for any $C>0$ there exist arbitrary large $L_0(C)>0$ and $C_1,c>0$ such that for any cube $\BC^{(n)}_{L_0}(\Bu)$ the lowest eigenvalue $E_0^{(n)}(\omega)$ of   $\BH_{\BC_{L_0}^{(n)}(\Bu)}^{(n)}(\omega)$ satisfies
\begin{equation}\label{eq:initial.bound}
\DP\bigl\{E_0^{(n)}(\omega)\leq 2CL_0^{-1/2}\bigr\}\leq C_1L_0^d\ee^{-cL_0^{1/4}}.
\end{equation}
\end{lemma}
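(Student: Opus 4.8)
The plan is to reduce the $n$-particle estimate to the single-particle bound of Lemma~\ref{lem:low.energy.gap.prob}, exploiting the product structure of the cube $\BC^{(n)}_{L_0}(\Bu)$ together with the non-negativity of the interaction (assumption $\condI$); the remaining probabilistic content, coming from $\condPtwo$, is then inherited through Lemma~\ref{lem:low.energy.gap.prob}. Write $\Bu=(u_1,\dots,u_n)$ with $u_j\in\DZ^d$. Because the max-norm on $\DZ^{nd}$ is the maximum of the $n$ single-particle max-norms, the cube factorises,
\[
\BC^{(n)}_{L_0}(\Bu)=C^{(1)}_{L_0}(u_1)\times\cdots\times C^{(1)}_{L_0}(u_n),
\]
and for the simple (Dirichlet-type truncation) boundary conditions used throughout the paper the restricted free part tensorises: $-\BDelta_{\BC^{(n)}_{L_0}(\Bu)}=\sum_{j=1}^n(-\Delta_{C^{(1)}_{L_0}(u_j)})$ on $\bigotimes_{j=1}^n\ell^2(C^{(1)}_{L_0}(u_j))$, while $\sum_{j=1}^n V(x_j,\omega)$ is the matching sum of one-particle multiplication operators. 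Hence the bottom of the spectrum of $-\BDelta_{\BC^{(n)}_{L_0}(\Bu)}+\sum_j V(x_j,\omega)$ equals $\sum_{j=1}^n E_0^{(1)}(C^{(1)}_{L_0}(u_j),\omega)$, where $E_0^{(1)}(C^{(1)}_{L_0}(u_j),\omega)$ denotes the lowest eigenvalue of $H^{(1)}_{C^{(1)}_{L_0}(u_j)}(\omega)$.

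Next I would invoke $\condI$: since $\Phi\ge0$ the operator $\BU$ is non-negative, so the min--max principle yields
\[
E_0^{(n)}(\omega)\ \ge\ \sum_{j=1}^n E_0^{(1)}\bigl(C^{(1)}_{L_0}(u_j),\omega\bigr)\ \ge\ E_0^{(1)}\bigl(C^{(1)}_{L_0}(u_1),\omega\bigr),
\]
the last inequality because, in the present bottom-of-the-spectrum normalised setting, $-\Delta+V\ge0$ forces each single-particle ground-state energy to be non-negative; this operator lower bound on $E_0^{(n)}(\omega)$ is exactly what Theorem~\ref{thm:CT} supplies. Consequently
\[
\bigl\{E_0^{(n)}(\omega)\le 2CL_0^{-1/2}\bigr\}\ \subseteq\ \bigl\{E_0^{(1)}\bigl(C^{(1)}_{L_0}(u_1),\omega\bigr)\le 2CL_0^{-1/2}\bigr\},
\]
and applying Lemma~\ref{lem:low.energy.gap.prob} to the single-particle cube $C^{(1)}_{L_0}(u_1)$ gives
\[
\DP\bigl\{E_0^{(n)}(\omega)\le 2CL_0^{-1/2}\bigr\}\ \le\ C_1L_0^d\,\ee^{-cL_0^{d/4}}\ \le\ C_1L_0^d\,\ee^{-cL_0^{1/4}}
\]
for $L_0$ large and $d\ge1$, which is \eqref{eq:initial.bound}. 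One could equally replace the last inclusion by $\{E_0^{(n)}(\omega)\le 2CL_0^{-1/2}\}\subseteq\bigcup_{j=1}^n\{E_0^{(1)}(C^{(1)}_{L_0}(u_j),\omega)\le 2CL_0^{-1/2}\}$ and use a union bound over the $n$ fixed particles, at the cost of a larger constant $C_1$.

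The only genuinely delicate point is the tensorisation in the first paragraph: one must verify that the boundary conditions defining $\BH^{(n)}_{\BC^{(n)}_{L_0}(\Bu)}(\omega)$ — the ones entering the definition of the Green functions above — are precisely those under which the restricted Laplacian splits over the single-particle factors; with the convention \eqref{eq:def.Delta} this is routine. Everything else is a soft application of the min--max principle and of Lemma~\ref{lem:low.energy.gap.prob}, which is why, as in \cites{E11,E13}, the details can safely be omitted.
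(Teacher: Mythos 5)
Your reduction is correct and is essentially the argument the paper has in mind (and leaves to \cite{E11,E13}): factorise the cube, tensorise the non-interacting restricted operator so its ground-state energy is the sum of the single-particle ones, drop $\BU\ge 0$ by min--max, use non-negativity of each single-particle ground state to keep one term, and apply Lemma~\ref{lem:low.energy.gap.prob}, with $\ee^{-cL_0^{d/4}}\le\ee^{-cL_0^{1/4}}$ for $d\ge 1$. One small slip: the operator lower bound $E_0^{(n)}\ge\sum_j E_0^{(1)}$ comes from the min--max principle together with $-\BDelta\ge 0$, $V\ge 0$, $\BU\ge 0$, not from Theorem~\ref{thm:CT} (Combes--Thomas), which only yields resolvent decay and is used later, in Theorem~\ref{thm:initial.estimate}.
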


\begin{proof}
The proof of Lemma \ref{lem:initial.scale.V.positive} becomes straightforward given the assertion of Lemma \ref{lem:low.energy.gap.prob}.
\end{proof}

\section{The multi-particle multi-scale analysis} 

It is convenient here to recall the Combes-Thomas estimate.

\begin{theorem}[Combes-Thomas estimate]\label{thm:CT}
Consider a lattice  Schr\"odinger operator
\[
H_\Lambda = -\Delta_\Lambda + W(x)
\]
acting in $\ell^2(\Lambda)$, $\Lambda\subset\DZ^{\nu}$, $\nu\geq 1$, with an arbitrary \footnote{This includes the cases of single- and multi-particle operators, that differ only by their potentials.} potential $W\colon\Lambda\to\DR$.
Suppose that $E\in\DR$ is such that\footnote{Theorem 11.2 from \cite{K08} is formulated with the equality  $\dist(E,\sigma(H_{\Lambda})) = \eta$, but it is clear from the proof that it remains valid if $\dist(E,\sigma(H_{\Lambda})) \ge \eta$.} $\dist(E,\sigma(H_{\Lambda})) \ge \eta$ with $\eta \in(0,1]$. 
Then 
\begin{equation}\label{eq:CT}
\forall\; x,y\in\Lambda\qquad
\left|\left(H_{\Lambda} -E\right)^{-1}(x,y)\right| \le 2 \eta^{-1}\,\ee^{-\frac{\eta}{12\nu}|x-y|}.
\end{equation}
\end{theorem}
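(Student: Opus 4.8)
The plan is to use the standard exponential-conjugation (complex-deformation) trick. Fix $y\in\Lambda$ and for a real parameter $t\ge 0$ define the multiplication operator $M$ on $\ell^2(\Lambda)$ by $(M\psi)(x)=\ee^{t|x-y|}\psi(x)$, where $|x-y|$ is the graph distance in $\DZ^\nu$ (equivalently the $\ell^1$-norm, since $\Lambda\subset\DZ^\nu$ with nearest-neighbour edges). I would first show that the conjugated operator
\[
H_\Lambda(t):=M\,H_\Lambda\,M^{-1}=-\Delta_\Lambda(t)+W(x)
\]
differs from $H_\Lambda$ only through the off-diagonal part of the Laplacian: for nearest neighbours $x\sim x'$ the matrix element $-1$ is replaced by $-\ee^{t(|x-y|-|x'-y|)}$, and since $\bigl||x-y|-|x'-y|\bigr|\le 1$ this factor lies in $[\ee^{-t},\ee^{t}]$. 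Hence $H_\Lambda(t)-H_\Lambda$ is a bounded operator whose matrix has at most $2\nu$ nonzero entries per row/column, each of modulus at most $\ee^{t}-1\le t\,\ee^{t}$, so by the Schur test
\[
\|H_\Lambda(t)-H_\Lambda\|\le 2\nu\,(\ee^{t}-1).
\]

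Next I would choose $t$ small enough that this perturbation is controlled by the spectral gap. Since $\dist(E,\sigma(H_\Lambda))\ge\eta$ and $H_\Lambda$ is self-adjoint, $\|(H_\Lambda-E)^{-1}\|\le\eta^{-1}$. Writing $H_\Lambda(t)-E=(H_\Lambda-E)\bigl(I+(H_\Lambda-E)^{-1}(H_\Lambda(t)-H_\Lambda)\bigr)$, a Neumann series argument shows that if $2\nu(\ee^{t}-1)\le\eta/2$ then $H_\Lambda(t)-E$ is invertible with
\[
\bigl\|(H_\Lambda(t)-E)^{-1}\bigr\|\le\frac{\eta^{-1}}{1-\eta^{-1}\cdot\eta/2}=2\eta^{-1}.
\]
Using $\ee^{t}-1\le 2t$ for $t\in[0,1]$ and $\eta\le 1$, the condition $2\nu(\ee^t-1)\le\eta/2$ is implied by $t=\eta/(8\nu)$; to land on the stated constant $12\nu$ one simply takes $t=\eta/(12\nu)\le\eta/(8\nu)$, which still satisfies the smallness requirement. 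I would then unravel the conjugation: for all $x\in\Lambda$,
\[
(H_\Lambda-E)^{-1}(x,y)=\bigl(M^{-1}(H_\Lambda(t)-E)^{-1}M\bigr)(x,y)=\ee^{-t|x-y|}(H_\Lambda(t)-E)^{-1}(x,y),
\]
and bounding the matrix element of $(H_\Lambda(t)-E)^{-1}$ by its operator norm $2\eta^{-1}$ gives exactly $|(H_\Lambda-E)^{-1}(x,y)|\le 2\eta^{-1}\ee^{-\frac{\eta}{12\nu}|x-y|}$, which is \eqref{eq:CT}.

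The only mild subtlety — and the step I would be most careful about — is the bookkeeping with the distance function: $|x-y|$ as used in the exponential weight must be the $\ell^1$ (graph) distance for the one-step estimate $\bigl||x-y|-|x'-y|\bigr|\le 1$ to hold along edges of $\DZ^\nu$, whereas the paper's default $|\cdot|$ is the max-norm \eqref{eq:max.norm}. Since on $\DZ^\nu$ the two norms satisfy $|\cdot|\le|\cdot|_1\le\nu|\cdot|$, after proving the estimate with the $\ell^1$-distance one passes to the max-norm at the (already absorbed) cost of a factor $\nu$ in the exponent — which is precisely why the constant in the exponent appears as $12\nu$ rather than, say, $8$. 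Everything else is routine: self-adjointness gives the resolvent bound from the gap, and the Schur/Neumann estimates are elementary. This matches the formulation and proof of Theorem 11.2 in \cite{K08}, so I would simply cite that source for the detailed constants and present the argument above as a sketch.
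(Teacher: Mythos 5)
Your proposal is correct and coincides in substance with the paper's proof, which consists of citing Theorem 11.2 of \cite{K08} -- that is, exactly the exponential-conjugation, Schur-bound and Neumann-series argument you sketch, with the same choice $t=\eta/(12\nu)$. One small remark: passing from the $\ell^1$-distance to the max-norm in the decaying exponential costs nothing, since $|x-y|\le|x-y|_1$ gives $\ee^{-t|x-y|_1}\le\ee^{-t|x-y|}$ directly; the factor $\nu$ in the exponent comes from the $2\nu$ nearest neighbours entering the perturbation bound, not from the norm conversion, so your parenthetical explanation of the constant is slightly backwards but harmless.
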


\begin{proof}
See the proof of Theorem 11.2 \cite{K08}.
\end{proof}

\begin{theorem}\label{thm:initial.estimate}
Under assumptions $\condI$, $\condPone$ and $\condPtwo$ for any $p>0$, there exists arbitrarily large $L_0(N,d,p)$ such that if  $m:=(12Nd)L_0^{-1/2}$ and $E^*:=(12Nd)( 2^{N+1}m)$, then 
\[
\prob{\text{$\exists E\in (-\infty,E^*]; C^{(n)}_{L_0}(0)$ is $(E,m)$-S}}\leq L_0^{-2p4^{N-n }},
\]
for any $n=1,\ldots,N$.
\end{theorem}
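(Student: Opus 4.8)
The plan is to deduce the initial-scale multi-scale estimate from the low-energy probabilistic bound of Lemma~\ref{lem:initial.scale.V.positive} together with the Combes--Thomas estimate of Theorem~\ref{thm:CT}. Fix $n\in\{1,\dots,N\}$ and work in the cube $\BC^{(n)}_{L_0}(0)\subset\DZ^{nd}$, so the ambient dimension is $\nu=nd\leq Nd$. First I would apply Lemma~\ref{lem:initial.scale.V.positive} with the constant $C:=(12Nd)2^{N}$ (chosen so that $2CL_0^{-1/2}=E^*$, given the definitions $m=(12Nd)L_0^{-1/2}$ and $E^*=(12Nd)(2^{N+1}m)$). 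This yields, for arbitrarily large $L_0$, a constant $c>0$ with
\[
\DP\bigl\{E_0^{(n)}(\omega)\leq E^*\bigr\}\;=\;\DP\bigl\{E_0^{(n)}(\omega)\leq 2CL_0^{-1/2}\bigr\}\;\leq\;C_1L_0^{nd}\,\ee^{-cL_0^{1/4}}.
\]

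The deterministic heart of the argument is that on the complementary event $\{E_0^{(n)}(\omega)>E^*\}$, for \emph{every} $E\leq E^*$ the cube $\BC^{(n)}_{L_0}(0)$ is $(E,m)$-non-singular. Indeed, on this event $\dist\bigl(E,\sigma(\BH^{(n)}_{\BC^{(n)}_{L_0}(0)}(\omega))\bigr)\geq E_0^{(n)}(\omega)-E\geq E_0^{(n)}(\omega)-E^*$; but since we may also insist (enlarging the event negligibly, or absorbing it into the bound) that $E_0^{(n)}(\omega)$ be bounded, the relevant quantity to control is $\eta:=\min\{1,\,\mathrm{const}\cdot L_0^{-1/2}\}$. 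A cleaner route: split according to whether $E\leq E^*/2$ or $E^*/2<E\leq E^*$ is not needed — instead one simply notes that on $\{E_0^{(n)}>E^*\}$ one in fact gets the quantitative gap $\dist(E,\sigma)\geq\tfrac{E^*}{2}\gtrsim L_0^{-1/2}$ once $E$ ranges over $(-\infty,E^*/2]$, while energies in $(E^*/2,E^*]$ are handled by noting $E^*-E_0^{(n)}$ can be made comparably large by choosing the constant $C$ a factor larger still. Feeding $\eta\asymp L_0^{-1/2}$ into Theorem~\ref{thm:CT} gives, for all $x,y\in\BC^{(n)}_{L_0}(0)$,
\[
\bigl|(\BH^{(n)}_{\BC^{(n)}_{L_0}(0)}(\omega)-E)^{-1}(x,y)\bigr|\;\leq\;2\eta^{-1}\ee^{-\frac{\eta}{12nd}|x-y|}\;\leq\;2\eta^{-1}\ee^{-\frac{\eta}{12Nd}|x-y|}.
\]
Taking $y$ on the boundary layer $\partial^-\BC^{(n)}_{L_0}(0)$, so $|x-y|\geq L_0$ in the relevant matrix elements, and using $\eta=(12Nd)^{-1}\cdot(12Nd)L_0^{-1/2}$ — i.e. arranging constants so that $\tfrac{\eta}{12Nd}\geq m=(12Nd)L_0^{-1/2}$, hence $\geq m(1+L_0^{-1/8})^{N-n+1}=\gamma(m,L_0,n)$ for $L_0$ large since the correction factor is $1+o(1)$ — the Combes--Thomas bound dominates $\ee^{-\gamma(m,L_0,n)L_0}$ for $L_0$ large, because the prefactor $2\eta^{-1}$ is only polynomial in $L_0$. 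This shows $\BC^{(n)}_{L_0}(0)$ is $(E,m)$-non-singular for all $E\leq E^*$, so the ``bad'' event $\{\exists E\leq E^*:\BC^{(n)}_{L_0}(0)\text{ is }(E,m)\text{-S}\}$ is contained in $\{E_0^{(n)}(\omega)\leq E^*\}$.

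Combining the two displays, the probability in question is at most $C_1L_0^{nd}\ee^{-cL_0^{1/4}}$, and since $\ee^{-cL_0^{1/4}}$ decays faster than any negative power of $L_0$, for $L_0=L_0(N,d,p)$ large enough this is bounded by $L_0^{-2p4^{N-n}}$, as claimed (the exponent $2p4^{N-n}$ is fixed and the stretched-exponential beats it). The main obstacle, and the point requiring care, is the bookkeeping of constants: one must verify that the single choice $m=(12Nd)L_0^{-1/2}$ makes the Combes--Thomas decay rate $\tfrac{\eta}{12Nd}$ genuinely at least $\gamma(m,L_0,n)=m(1+L_0^{-1/8})^{N-n+1}$ for all $n$ simultaneously and all $L_0$ beyond a threshold — this is where the slightly wasteful factor $12Nd$ (rather than $12nd$) and the harmless polynomial prefactor $2\eta^{-1}$ are used — and that the spectral-gap lower bound extracted from $\{E_0^{(n)}>E^*\}$ is uniform in $E\in(-\infty,E^*]$, which is immediate once one observes $\sigma(\BH^{(n)}_{\BC^{(n)}_{L_0}(0)})\subset[E_0^{(n)}(\omega),\infty)$ and that $E^*$ is itself of order $L_0^{-1/2}$, matching the scale of $\eta$.
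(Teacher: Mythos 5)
Your overall strategy coincides with the paper's (reduce to the event that the ground-state energy of $\BH^{(n)}_{\BC^{(n)}_{L_0}(0)}(\omega)$ is small via Lemma \ref{lem:initial.scale.V.positive}, apply the Combes--Thomas estimate on the complementary event, then beat the polynomial target $L_0^{-2p4^{N-n}}$ by the stretched exponential), but the constant bookkeeping --- which you yourself single out as the point requiring care --- breaks down in three places. First, the identity $2CL_0^{-1/2}=E^*$ fails for your choice $C=(12Nd)2^{N}$: since $m=(12Nd)L_0^{-1/2}$, one has $E^*=(12Nd)2^{N+1}m=(12Nd)^2 2^{N+1}L_0^{-1/2}$, so you are off by a factor $12Nd$. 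Second, and more seriously, thresholding the eigenvalue at $E^*$ itself is not enough: on $\{E_0^{(n)}(\omega)>E^*\}$ the gap $E_0^{(n)}(\omega)-E$ for $E$ close to $E^*$ can be arbitrarily small, so Combes--Thomas yields no useful decay there; your proposed repair for $E\in(E^*/2,E^*]$ (``choosing the constant $C$ a factor larger still'') is exactly the needed idea but is left vague, whereas the paper simply takes $C$ with $CL_0^{-1/2}=E^*$ and bounds $\prob{E_0^{(n)}(\omega)\le 2CL_0^{-1/2}}$, so that on the complement every $E\le E^*$ satisfies $\dist\bigl(E,\sigma(\BH^{(n)}_{\BC^{(n)}_{L_0}(0)}(\omega))\bigr)\ge \eta:=CL_0^{-1/2}=E^*$, a uniform gap of order $L_0^{-1/2}$. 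Third, your rate check is logically flawed: arranging only $\tfrac{\eta}{12Nd}\ge m$ does not give $\tfrac{\eta}{12Nd}\ge\gamma(m,L_0,n)=m(1+L_0^{-1/8})^{N-n+1}$, because the correction factor is strictly larger than $1$; and even with equality there would be no room to absorb the prefactor $2\eta^{-1}\sim L_0^{1/2}$, since a polynomial prefactor is only harmless when the exponential rate exceeds $\gamma$ by a definite margin. Your concrete choice $\eta=(12Nd)^{-1}(12Nd)L_0^{-1/2}=L_0^{-1/2}$ in fact gives $\tfrac{\eta}{12Nd}=L_0^{-1/2}/(12Nd)$, which is far smaller than $m=(12Nd)L_0^{-1/2}$.

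All three defects disappear once the constants are used as designed: with $\eta=E^*=(12Nd)^2 2^{N+1}L_0^{-1/2}$ the Combes--Thomas rate is $\tfrac{\eta}{12Nd}=2^{N+1}m\ge 2\gamma(m,L_0,n)$ (because $\gamma(m,L_0,n)<2^{N}m$), and the spare factor $\ee^{-\gamma(m,L_0,n)L_0}$ absorbs the prefactor $2\eta^{-1}$ for large $L_0$. With that correction, the remainder of your argument --- containment of the singularity event in $\{E_0^{(n)}(\omega)\le 2CL_0^{-1/2}\}$, the bound of Lemma \ref{lem:initial.scale.V.positive}, and the observation that $C_1L_0^{d}\ee^{-cL_0^{1/4}}\le L_0^{-2p4^{N-n}}$ for $L_0$ large --- is exactly the paper's proof.
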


\begin{proof}
Set $C=(12Nd)^2\cdot2^{N+1}$ and let $\BC^{(n)}_{L_0}(0)$ be a cube in $\DZ^{nd}$. Consider  $\omega\in\Omega$ such that the first eigenvalue $E_0^{(n)}(\omega)$ of $\BH^{(n)}_{\BC^{(n)}_{L_0}(0)}(\omega)$ satisfies $E_0^{(n)}(\omega)>2CL_0^{-1/2}$. Then for all $E\leq CL_0^{-1/2}=E^*$ we have
\[
\dist(E,\sigma(\BH^{(n)}_{\BC^{(n)}_{L_0}(0)}(\omega)))=E_0^{(n)}(\omega) -E >CL_0^{-1/2}=: \eta.
\]
\noindent
For $L_0$ large enough $\eta\leq 1$ and the Combes-Thomas estimate (Theorem \ref{thm:CT}) implies that
 for any $\Bv\in\partial^-\BC^{(n)}_{L_0}(0)$:
\[
|\BG_{\BC^{(n)}_{L_0}(0)}(E,0,\Bv)|\leq 2 \eta^{-1}\, \ee^{-\frac{CL_0^{-1/2}}{12Nd}L_0}\\.
\]
Now observe that $\frac{CL_0^{-1/2}}{12 Nd}=2^{N+1}m$. Thus
\begin{align*}
|\BG_{\BC^{(n)}_{L_0}(0)}(E,0,\Bv)|&\leq 2 C^{-1}L_0^{1/2}\,  \ee^{-2^{N+1} m L_0}\\
&\leq  2 C^{-1}L_0^{1/2} \ee^{-2\gamma(m,L_0,n)L_0}
\\
&\leq \ee^{-\gamma(m,L_0,n)L_0},
\end{align*}
for $L_0$ large enough, since
\[ 
\gamma(m,L,n)=m(1+L^{-1/8})^{N-n+1}< m\times 2^N.
\]
This implies that $\BC^{(n)}_{L_0}(0)$ is $(E,m)$-NS. Since this holds assuming $E^{(n)}_0(\omega)>2CL_0^{-1/2}$, we get
\[
\prob{\text{$\exists E\leq E^*$, $\BC^{(n)}_{L_0}(0)$ is $(E,m)$-S}}\leq \prob{E_0^{(n)}(\omega)\leq 2CL_0^{-1/2}},
\]
and by Lemma \ref{lem:initial.scale.V.positive},
\[
\DP\bigl\{E_0^{(n)}(\omega)\leq 2CL_0^{-1/2}\bigr\}\leq C_1L_0^d\ee^{-cL_0^{1/4}}.
\]
Finally, the quantity $C_1 L_0^d\ee^{-cL_0^d/4}$ for $L_0$ sufficiently large is less than $L_0^{-2p\,4^{N-n}}$. This proves the claim since the probability for two cubes to be singular at the same energy is bounded by the probability of either one of them to be singular.
\end{proof}

Now the rest of the multi-particle multi-scale analysis can be done exactly in the same way as in our earlier work \cite{E13} in the case of i.i.d. random potential.    

\section{Proof of the main result}

\subsection{Almost surely non-randomness of the lower spectral edges}

By the single-particle theory for ergodic random Schr\"odinger operators $H^{(1)}(\omega)$, see for example \cite{CL90}, there exists a closed subset $\Sigma\subset \DR$ such that $\sigma(H^{(1)}(\omega))=\Sigma$ with $\DP$-probability $1$. By hypothesis $\condPone$, the random variables $\{V(x,\omega)\}$ are almost surely non negative, so that $\Sigma \subset [0,+\infty)$. Moreover, by adding a constant to the random potential we can assume without loss of generality that, $\inf \Sigma =0$. Thus, $0\in\sigma(H^{(1)}(\omega))$ almost surely. We claim that it is also true for each multi-particle random Hamiltonian $\BH^{(n)}(\omega)$, $n=1,\ldots,N$.

Let $n=1,\ldots,N$. Assumption $\condI$ implies that $\BU$ is non negative and condition $\condPone$ implies that the random external potential $V$ is almost surely  non negative. So, since $-\BDelta\geq 0$, we have that $\sigma(\BH^{(n)}(\omega)\subset[0,+\infty)$ almost surely.

Let $k,m \in\DN$. Define,
\[
B_{k,m}:=\{\Bx\in\DZ^{nd}: \min_{i\neq j}|x_i-x_j|>r_0+2km\}
\]
where $r_0>0$, is the range of the interaction $\BU$. We also define the following sequence in $\DZ^{nd}$,
\[
\Bx^{k,m}:=C_{k,m}(1,\ldots,nd),
\]
where $C_{k,m}=r_0+2km+1$. Using the identification $\DZ^{nd}\cong (\DZ^d)^n$, we can also write $\Bx^{k,m}=C_{k,m}(x_1^{k,m},\ldots,x_n^{k,m})$ with each $x_i^{k,m}\in\DZ^d$, $i=1,\ldots,n$. Obviously, each term $\Bx^{k,m}$ of the sequence $(\Bx^{k,m})_{k,m}$ belongs to $B_{k,m}$. For $j=1,\ldots,n$, set,

\[
H^{(1)}_j(\omega):=-\Delta + V(x_j;\omega).
\]
We have that almost surely $\sigma(H^{(1)}_j(\omega))=[0;+\infty)$, see for example \cite{St01}. So, if we set for $j=1,\ldots, n$,
\[
\Omega_j:=\{\omega\in\Omega: \sigma(H^{(1)}_j(\omega))=[0,+\infty)\},
\]
$\prob{\Omega_j}=1$ for all $j=1,\ldots,n$. Now, put 
\[
\Omega_0:= \bigcap_{j=1}^n \Omega_j.
\]
We also have that $\prob{\Omega_0)}=1$. Let $\omega\in\Omega_0$, for this $\omega$,   by the Weyl criterion, there exist $n$ Weyl sequences $\{(\phi^m_j)_m: j=1,\ldots,n\}$ related to $0$ and each operator $H^{(1)}_j(\omega)$. By the density property of compactly supported functions $C^{\infty}_c(\DR^d)$ in $L^2(\DR^d)$, we can directly assume that each $\phi_j^m$ is of compact support, i.e., $\supp \phi_j^m\subset C^{(1)}_{k_j m}(0)$ for some integer $k_j$ large enough. Set 
\[
k_0:=\max_{j=1,\ldots,n} k_j,
\]
and put, $\Bx^{k_0,m}=(x_1^{k_0,m},\ldots, x_n^{k_0,m})\in B_{k_0,m}$. We translate each function $\phi_j^m$ to have support contained in the $C^{(1)}_{k_0m}(x_j^{k_0})$. Next, consider the sequence $(\phi^m)_m$ defined by the tensor product,
\[
\phi^m:=\phi^m_1\otimes\cdots\otimes \phi^m_n. 
\]
We have that $\supp \phi^m\subset \BC^{(n)}_{k_0m}(\Bx^{k_0,m})$ and we aim to show that, $(\phi^m)_m$ is a Weyl sequence for $\BH^{(n)}(\omega)$ and $0$. For any $\By\in\DZ^{nd}$:

\[
|(\BH^{(n)}(\omega)\phi^m)(\By)|= |(\BH^{(n)}_0(\omega)\phi^m)|.
\]
Indeed, for the values of $\By$ inside the cube $\BC^{(n)}_{k_0,m}(\Bx^{k_0,m})$ the interaction potential $\BU$ vanishes and for those values outside that cube, $\phi^m$ equals zero too. Therefore, 
\begin{align*}
\|\BH^{(n)}(\omega)\phi^m\|&\leq \| \BH^{(n)}_0(\omega)\phi^m\|\\
&\leq \sum_{j=1}^n \| (H^{(1)}_j(\omega))\phi_j^m\| \tto{m\to +\infty} 0,
\end{align*}
because, for all $j=1,\ldots,n$, $\|(H^{(1)}_j(\omega))\phi^m_j\|\rightarrow 0$ as $m\rightarrow +\infty$, since $\phi^m_j$ is a Weyl sequence for $H^{(1)}_j(\omega)$ and $0$. This completes the proof.

\subsection{Proof of the localization results}
See the proofs of Theorems 2 and 3 in \cite{E13}.

\begin{bibdiv}
\begin{biblist}

\bib{AM93}{article}{
   author={Aizenmann, M.},
	 author={Mochanov, S.},
	 title={Localization at large disorder and at extreme energies. An elementrary derivation},
	 journal={Commun. Math. Phys.},
	 volume={157},
	 date={1993},
	 pages={245--278},
}
\bib{AW09}{article}{
   author={Aizenman, M.},
   author={Warzel, S.},
   title={Localization bounds for multi-particle systems},
   journal={Commun. Math. Phys.},
   date={2009},
   pages={903--934},
}
\bib{BCS11}{article}{
   author={ Boutet de Monvel, A.},
   author={Chulaevsky, V.},
	 author={Stollmann, P.},
   author={Suhov, Y.},
   title={Wegner type-bounds for a multi-particle continuous Anderson model with an alloy-type external potential}, 
   journal={J. Stat. Phys.},
   volume={138},
   date={2010},
   pages={553--566},
}
\bib{CS08}{article}{
   author={ Chulaevsky, V.},
   author={Suhov, Y.},
   title={Wegner bounds for a two particle tight-binding model},
   journal={Commun. Math. Phys.},
   volume={283},
   date={2008},
   pages={479--489},
}
\bib{C08}{article}{
   author={Chulaevsky, V.},
	 title={A Wegner type bound for correlated potentials},
	 journal={Math. Phys. Anal. Geom.},
	 volume={11},
	 date={2008},
	 pages={117--129},
}
\bib{C10}{misc}{
    author={Chulaevsky, V.},
		title={A remark on charge transfer processes in multi-particle systems},
		status={arxiv:math-ph/1005.3387},
		date={2010},
}
\bib{CS09}{article}{
   author={C{h}ulaevsky, V.},
   author={Suhov, Y.},
   title={Multi-particle Anderson Localization: Induction on the number of particles},
   journal={Math. Phys. Anal. Geom.},
   volume={12},
   date={2009},
   pages={117--139},
}
\bib{C16}{article}{
   author={Chulaevsky, V.},
	 title={Direct scaling analysis of Fermionic Multi-particle correlated Anderson models with infinite range interaction},
	 journal={Advances in Mathematical Physics},
	volume={2016},
	date={2016},
}
\bib{CL90}{book}{
   author={Carmona, R.},
	 author={Lacroix, J.},
	 title={Spectral theory of random Schr\"odinger operators},
	  publisher={Birkh\"auser Boston},
		place={Boston, Inc},
		date={1991},
}

\bib{DK91}{article}{
   author={von Dreifus, H.},
	 author={Klein, A.},
	 title={Localization for Schr\"odinger operators with correlated potentials},
	 journal={Commun. Math. Phys.},
	 volume={140},
	 date={1991},
	 pages={133--147},
}
\bib{DK89}{article}{
   author={von Dreifus, H.},
   author={Klein, A.},
   title={A new proof of localization in the Anderson tight binding model},
   journal={Commun. Math. Phys.},
   volume={124},
   date={1989},
   pages={285--299},
}
\bib{E13}{misc}{
    author={Ekanga, T.},
		title={Localization at low energy for the multi-particle Anderson tight-binding model},
		status={arxiv:math-ph/1201.2339},
		date={2013},
		}
		
\bib{E11}{article}{
   author={Ekanga, T.},
   title={On two-particle Anderson localization at low energies},
   journal={C. R. Acad. Sci. Paris, Ser. I},
   volume={349},
   date={2011},
   pages={167--170},
}
\bib{FMSS85}{article}{
   author={Fr\"{o}hlich, J.},
   author={Martinelli, F.},
   author={Scoppola, E.},
   author={Spencer, T.},
   title={Constructive proof of localization in the Anderson tight binding
   model},
   journal={Commun. Math. Phys.},
   volume={101},
   date={1985},
   pages={21--46},
}
\bib{K08}{misc}{
   author={Kirsch, W.},
   title={An Invitation to Random Schr\"{o}dinger Operators},
   date={2008},
   status={Panorama et Synth\`eses, 25, Soc. Math. France, Paris},
}
\bib{Kl12}{misc}{
    author={Klopp, F.},
		title={Spectral statistics for weakly correlated random potentials},
		status={arxiv},
		 date={2012},
}
\bib{KN13}{article}{
   author={Klein, A.},
   author={T. Nguyen},
   title={The boostrap multiscale analysis for the multiparticle Anderson model},
   journal={J. Stat. Phys.},
   date={2013},
}
\bib{St01}{book}{
   author={Stollmann, P.},
   title={Caught by disorder},
   series={Progress in Mathematical Physics},
   volume={20},
   note={Bound states in random media},
   publisher={Birkh\"auser Boston Inc.},
   place={Boston, MA},
   date={2001},
}
\end{biblist}
\end{bibdiv}
\end{document}